\documentclass[letterpaper,10pt,conference]{ieeeconf}
\IEEEoverridecommandlockouts
\usepackage{amsmath,amssymb,amsfonts,mathtools}
\usepackage{amsthm}
\usepackage{bm}
\usepackage{cite}
\usepackage{algorithm}
\usepackage{algorithmic}
\usepackage{graphicx}
\usepackage{booktabs}

\newtheorem{assumption}{Assumption}
\newtheorem{theorem}{Theorem}
\newtheorem{lemma}{Lemma}

\newtheorem{corollary}{Corollary}

\newcommand{\B}{\mathcal{B}}
\newcommand{\R}{\mathbb{R}}
\newcommand{\U}{\mathcal{U}}
\newcommand{\D}{\mathcal{D}}
\newcommand{\Ssafe}{\mathcal{S}}
\newcommand{\Ksafe}{\mathcal{K}_B}
\newcommand{\Xf}{\mathcal{X}_f}

\newcommand{\Bball}{\mathbb{B}}
\newcommand{\tr}{\mathrm{tr}}

\title{\LARGE\bfseries Safety-Aware Dimension Selection for Reduced-Order Model Predictive Control}

\author{Amirhossein Asadi$^{1}$ and Nitin Sharma$^{2}$%
\thanks{$^{1}$A. Asadi is with the Department of Mechanical and Aerospace Engineering, NC State University, Raleigh, NC 27606 USA (e-mail: aasadi@ncsu.edu).}%
\thanks{$^{2}$N. Sharma is with the UNC/NC State Joint Department of Biomedical Engineering, NC State University, Raleigh, NC 27606 USA (e-mail: nsharm23@ncsu.edu). Corresponding author: Nitin Sharma.}%
}

\begin{document}
\maketitle
\thispagestyle{empty}
\pagestyle{empty}

\begin{abstract}
Model reduction can substantially reduce the computational burden of model predictive control (MPC) for high-dimensional systems, but a reduced dimension chosen for performance may be insufficient for safety-critical operation. This paper introduces a safety-aware dimension-selection framework in which the prediction model adapts to the order of the model required to maintain a certified finite-horizon safety condition. Dynamic active subspaces provide a performance-oriented ordering of reduced coordinates, while finite-horizon error bounds robustly enforce discrete-time control barrier function constraints. The controller selects the smallest prediction order for which the resulting safety-critical MPC problem remains feasible. We establish full-order horizon safety, recursive feasibility, and closed-loop invariance of the safe set under the proposed reduced-order formulation. The proposed approach therefore adapts the prediction complexity to safety requirements, avoiding unnecessary high-dimensional prediction during benign operation.

\end{abstract}

\begin{keywords}
Dynamic active subspaces, control barrier functions, model predictive control, adaptive-order model reduction, safety-critical control.
\end{keywords}

\section{Introduction}

Model predictive control (MPC) is attractive for constrained nonlinear systems because it optimizes performance while explicitly accounting for future dynamics and constraints. Its main limitation in high-dimensional applications is computational, including applications such as whole-body robotic control, power and energy networks, and biomedical systems with coupled human--robot or neuromuscular dynamics \cite{schurig2024,lorenzetti2022}. The plant model is propagated repeatedly over the prediction horizon, and direct transcription methods introduce state variables and dynamic constraints whose number grows with model dimension. Model-order reduction addresses this difficulty by replacing the full-order prediction with a lower-dimensional representation. Active subspaces provide sensitivity-based coordinates for low-dimensional ridge approximations of scalar quantities of interest \cite{constantine2015,constantineRidge2017}, and time-dependent variants allow the important directions to evolve with operating conditions \cite{constantine2017}. In control, low-dimensional subspace methods have been used to reduce MPC decision spaces while retaining feasibility and stability guarantees \cite{pan2023,schurig2024}, and dynamic active subspaces (DyAS) have been used for model-predictive allocation in over-actuated systems \cite{singh2024}. Active-subspace state abstractions have also been paired with formal error guarantees \cite{awan2025}.

Safety imposes different requirements when reduced models are used for safety-critical control \cite{cohen2024romcbf}. A DyAS dimension selected based on nominal performance sensitivity may be sufficient for the task objective but too small for accurate prediction near a safety-critical regime. In contrast, using a fixed dimension sized for worst-case conditions can sacrifice much of the computational savings during benign operation \cite{dones2010}. Discrete-time control barrier functions (DCBFs) provide a natural language for this tradeoff because they express safety as a one-step inequality and can be embedded directly in MPC \cite{zeng2021,katriniok2024}. Reduced-order MPC, meanwhile, uses computable approximation bounds to tighten constraints and certify a low-dimensional controller \cite{lohning2014,kartmann2026}, while adaptive model-reduction methods update or enrich the reduced representation when a static approximation becomes inadequate \cite{chellappa2020,peherstorfer2015}.

The central question of this paper is: 1) when does a reduced dimension, selected based on performance sensitivity, sufficiently predict safety? and 2) when are additional coordinates required? Rather than imposing a fixed approximation tolerance, we assess each candidate DyAS order through the resulting finite-horizon safety certificate. Starting from the conventional performance order, we increase the prediction dimension only when the associated reduction error prevents feasibility of the robust DCBF-MPC problem. Two modeling choices are essential to make this idea mathematically precise. First, an active subspace supplies sensitivity coordinates; it does not by itself imply that the physical state is well approximated by the orthogonal projection \cite{constantineRidge2017}. We therefore use a separate decoder to reconstruct physical states from the DyAS coordinates, and all reconstruction quality enters the certificate explicitly. Second, because the full-order state is measured at every MPC update, the first DCBF constraint is evaluated exactly using the full-order one-step map. The reduced model is used only for future prediction. Hence, safety-driven order adaptation is required for maintaining a feasible future certificate and terminal reachability, not for evaluating instantaneous safety at the measured state.

The proposed architecture is intentionally simple. For each candidate DyAS order, a low-dimensional predictor and decoder generate the future nominal trajectory. The exact first-step mismatch initializes a scalar prediction-error envelope, which is propagated over the remaining horizon using a certified one-step model defect and an incremental bound on the full-order dynamics. This envelope tightens all future DCBF constraints and enforces that each uncertainty set remains inside a fixed certification domain. A single basis-independent full-order terminal set, computed offline, closes the recursive-feasibility argument, and its terminal feedback is chosen from the full-order DCBF-admissible input set. Under the deterministic plant model, a certified input sequence may also be executed for multiple steps if a later optimization fails.

The contributions of this paper are threefold. First, we develop a finite-horizon error certification framework that quantifies the full-order prediction uncertainty associated with each reduced DyAS order and uses it to robustly enforce future DCBF constraints. Second, we introduce a safety-aware dimension-selection rule that chooses the smallest prediction order, no lower than the performance-oriented DyAS order, for which the safety-critical MPC problem remains feasible. Third, we establish full-order horizon safety, recursive feasibility, and closed-loop invariance of the safe set, including the case where a previously certified input sequence is executed over multiple steps. Together, these results allow the prediction dimension to adapt to the safety requirements of the current operating condition rather than to a fixed approximation tolerance.

\section{Problem Formulation}

\subsection{System Dynamics and Safety}

Consider the sampled nonlinear system
\begin{equation}
    x_{k+1}=F(x_k,u_k),
    \qquad x_k\in\R^n,\quad u_k\in\U\subset\R^m,
    \label{eq:fom}
\end{equation}
where $F:\R^n\times\U\rightarrow\R^n$ is the discrete-time state-transition map and $\U$ is compact. The plant model is deterministic; the present analysis excludes exogenous disturbances and plant--model mismatch. We use this assumption explicitly when we retain a previously certified input sequence as a backup. We also fix a compact certification domain $\D\subset\R^n$ containing the intended operating region. Let
\begin{equation}
    \Ssafe:=\{x\in\D:h(x)\ge0\},
    \label{eq:safeset}
\end{equation}
where $h:\R^n\rightarrow\R$ is continuous. For $x\in\D$, define the discrete-time barrier-admissible input set
\begin{equation}
    \Ksafe(x):=
    \left\{
    u\in\U:
    F(x,u)\in\D,\;
    h(F(x,u))\ge\beta h(x)
    \right\},
    \label{eq:Ksafe}
\end{equation}
where $\beta\in[0,1)$. Following standard discrete-time CBF terminology, $h$ is a DCBF on $\Ssafe$ when $\Ksafe(x)$ is nonempty for every $x\in\Ssafe$; any input in $\Ksafe(x)$ maps a safe state to another state in $\Ssafe$. The analysis does not require $\Ksafe(x)$ to be nonempty for every \(x\in S\); instead, the controller enforces admissibility explicitly.

\begin{assumption}
\label{ass:regularity}
The map $F$ is jointly continuous on $\D\times\U$ and there exist constants $\rho\ge0$ and $L_h\ge0$ such that, for all $x,y\in\D$ and $u\in\U$,
\begin{align}
    \|F(x,u)-F(y,u)\|&\le\rho\|x-y\|,
    \label{eq:incremental}\\
    |h(x)-h(y)|&\le L_h\|x-y\|.
    \label{eq:hLip}
\end{align}
\end{assumption}

Because $F$ and $h$ are continuous and $\U$ is compact, $\Ksafe(x)$ is a closed subset of $\U$ and is therefore compact whenever nonempty, and a backup input can also be selected from it using any continuous optimization criterion. 

\subsection{Dynamic Active Subspaces}

Let $J_p:\mathbb{R}^n\times\mathbb{N}_0\rightarrow\mathbb{R}$ be a differentiable scalar performance observable. A streaming sensitivity covariance $C_{p,k}\in\mathbb{R}^{n\times n}$ is updated as
\begin{equation}
    C_{p,k+1}
    =(1-\eta)C_{p,k}
    +\eta\,\nabla J_p(x_k,k)\nabla J_p(x_k,k)^\top,
    \label{eq:covariance}
\end{equation}
where $\eta\in(0,1]$. We eigendecompose $C_{p,k}$ as
\begin{equation}
    C_{p,k}=W_k\Lambda_kW_k^\top,
    \qquad
    \lambda_{1,k}\ge\cdots\ge\lambda_{n,k}\ge0,
    \label{eq:eig}
\end{equation}
where $\Lambda_k=\mathrm{diag}(\lambda_{1,k},\ldots,\lambda_{n,k})$ contains the eigenvalues of $C_{p,k}$, and $W_k=[w_{1,k},\ldots,w_{n,k}]$ is the orthogonal matrix of corresponding eigenvectors. Define
\begin{equation}
    W_{q,k}:=[w_{1,k},\ldots,w_{q,k}],
    \qquad q=1,\ldots,n.
    \label{eq:basis}
\end{equation}
For a prescribed performance-energy fraction $\eta_p\in(0,1)$,
\begin{equation}
    q_p(k):=
    \min\left\{
    q:
    \frac{\sum_{i=1}^{q}\lambda_{i,k}}
         {\sum_{i=1}^{n}\lambda_{i,k}}
    \ge\eta_p
    \right\},
    \label{eq:qperf}
\end{equation}
whenever $\tr(C_{p,k})>0$. Thus, $q_p(k)$ is the reduced prediction order selected purely from the performance-sensitivity criterion, without considering safety constraints. Otherwise, a prescribed minimum order is used. The DyAS basis defines sensitivity coordinates but is \emph{not} assumed to be a state-approximation basis. For a candidate $q\ge q_p(k)$, hold $W_q:=W_{q,k}$ fixed over the current prediction horizon and define
\begin{equation}
    z_{0|k}=W_q^\top x_k,
    \qquad
    z_{j+1|k}=F_q(z_{j|k},u_{j|k}).
    \label{eq:rom}
\end{equation}
The reduced predictor $F_q:\R^q\times\U\rightarrow\R^q$ may be constructed by projection, regression, or another model-reduction procedure; the specific construction used in the numerical study is given in Section~V. A separate decoder
\begin{equation}
    \hat x_{j|k}=R_q(z_{j|k}),
    \qquad R_q:\R^q\rightarrow\R^n,
    \label{eq:decoder}
\end{equation}
reconstructs a physical nominal state from the active coordinates.

\begin{assumption}
\label{ass:rom}
For every candidate order $q$, the maps $F_q$ and $R_q$ are continuous on the relevant compact sets, and there exists a computable nonnegative defect bound $\delta_q$ such that
\begin{equation}
    \|F(R_q(z),u)-R_q(F_q(z,u))\|
    \le\delta_q(z,u)
    \label{eq:defect}
\end{equation}
whenever $R_q(z)\in\D$ and $u\in\U$. For $q<n$, evaluation of $F_q$, $R_q$, and $\delta_q$
is assumed cheaper than repeated full-order propagation.
At full order $q=n$, with $W_n=W_{n,k}$, set
$R_n(z)=W_nz$, $F_n(z,u)=W_n^\top F(W_nz,u)$,
and $\delta_n(z,u)\equiv0$.
\end{assumption}

The decoder explicitly separates performance sensitivity from state reconstruction. If the active coordinates do not reconstruct future physical states accurately at low order, the certified error envelope developed next becomes large and that order is rejected by the future safety constraints.
\section{Safety Certification and Order Selection}

\subsection{Finite-Horizon Error Bound}

For a fixed prediction horizon $N\in\mathbb{N}$ and candidate input sequence, define the conceptual full-order prediction
\begin{equation}
    x_{0|k}=x_k,
    \qquad
    x_{j+1|k}=F(x_{j|k},u_{j|k}).
    \label{eq:fompred}
\end{equation}
Only the first full-order step is evaluated online inside a reduced candidate. For a given reduced prediction, define the \emph{minimal certified envelope} recursively by
\begin{align}
    \bar r_{1,q}
    &:={}
    \|F(x_k,u_{0|k})-R_q(z_{1|k})\|,
    \label{eq:r1}\\
    \bar r_{j+1,q}
    &:={}
    \rho\bar r_{j,q}+\delta_q(z_{j|k},u_{j|k}),
    \quad j=1,\ldots,N-1.
    \label{eq:envelope}
\end{align}
Thus $\bar r_{j,q}$ is an auxiliary quantity determined by $(\mathbf z,\mathbf u)$, not a free optimization variable. Let $\Bball(r):=\{e:\|e\|\le r\}$, and $\oplus$ denotes the Minkowski sum.

\begin{lemma}
\label{lem:error}
Under Assumptions~\ref{ass:regularity}--\ref{ass:rom}, suppose
\begin{equation}
    R_q(z_{j|k})\oplus\Bball(\bar r_{j,q})\subseteq\D,
    \qquad j=1,\ldots,N.
    \label{eq:domaincontain}
\end{equation}
Then
\begin{equation}
    \|x_{j|k}-R_q(z_{j|k})\|\le\bar r_{j,q},
    \qquad j=1,\ldots,N.
    \label{eq:errorbound}
\end{equation}
\end{lemma}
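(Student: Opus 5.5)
The bound \eqref{eq:errorbound} will follow by induction on $j$, with the domain-containment hypothesis \eqref{eq:domaincontain} supplying the membership in $\D$ needed to apply Assumption~\ref{ass:regularity} and Assumption~\ref{ass:rom} at each step. Write $e_j:=x_{j|k}-R_q(z_{j|k})$. The base case $j=1$ is immediate. By \eqref{eq:fompred}, $x_{1|k}=F(x_k,u_{0|k})$, so $\|e_1\|=\|F(x_k,u_{0|k})-R_q(z_{1|k})\|=\bar r_{1,q}$ by the definition \eqref{eq:r1}. No regularity is needed here, since the first step is evaluated exactly.

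For the inductive step, assume $\|e_j\|\le\bar r_{j,q}$ for some $1\le j\le N-1$. First I will establish that both points to be compared lie in $\D$. Clearly $R_q(z_{j|k})\in R_q(z_{j|k})\oplus\Bball(\bar r_{j,q})\subseteq\D$. Also $x_{j|k}=R_q(z_{j|k})+e_j$ with $e_j\in\Bball(\bar r_{j,q})$, so $x_{j|k}\in\D$ as well. This is the key use of \eqref{eq:domaincontain}: it guarantees that the unknown true trajectory stays in the region where the incremental bound \eqref{eq:incremental} holds, and that the defect bound \eqref{eq:defect} applies at $z_{j|k}$.

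Next I insert and subtract $F(R_q(z_{j|k}),u_{j|k})$:
\[
e_{j+1}=\bigl[F(x_{j|k},u_{j|k})-F(R_q(z_{j|k}),u_{j|k})\bigr]+\bigl[F(R_q(z_{j|k}),u_{j|k})-R_q(F_q(z_{j|k},u_{j|k}))\bigr].
\]
Here I use $z_{j+1|k}=F_q(z_{j|k},u_{j|k})$ from \eqref{eq:rom}. The first bracket is bounded by $\rho\|e_j\|\le\rho\bar r_{j,q}$ via \eqref{eq:incremental}, which is valid because both arguments lie in $\D$ and $u_{j|k}\in\U$. The second bracket is bounded by $\delta_q(z_{j|k},u_{j|k})$ via \eqref{eq:defect}, which is valid because $R_q(z_{j|k})\in\D$. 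The triangle inequality and \eqref{eq:envelope} then give $\|e_{j+1}\|\le\bar r_{j+1,q}$, which closes the induction.

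The only delicate point is the order of the logic. Membership $x_{j|k}\in\D$ is not assumed; it must be derived from the inductive hypothesis together with \eqref{eq:domaincontain} at the same index $j$, before the Lipschitz bound at step $j$ is invoked. Containment at index $N$ is not needed for the bound itself, only for $j\le N-1$. For the full-order case $q=n$, the argument needs no special treatment: there $\delta_n\equiv0$ and $R_n(F_n(z,u))=W_nW_n^\top F(W_nz,u)=F(W_nz,u)$ by orthogonality, which is consistent with Assumption~\ref{ass:rom}.
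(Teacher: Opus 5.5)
Your proof is correct and follows the paper's argument essentially step for step. It runs an induction from the exact first-step definition \eqref{eq:r1}, uses \eqref{eq:domaincontain} together with the inductive hypothesis to place both $x_{j|k}$ and $R_q(z_{j|k})$ in $\D$, and then splits the error by inserting $F(R_q(z_{j|k}),u_{j|k})$, bounding the two pieces with \eqref{eq:incremental} and \eqref{eq:defect}; your added remarks (containment at $j=N$ is not needed, and the $q=n$ defect vanishes by orthogonality of $W_n$) are also correct.
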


\begin{proof}
The claim holds at $j=1$ by \eqref{eq:r1}. If it holds at step $j$, then \eqref{eq:domaincontain} places both $x_{j|k}$ and $R_q(z_{j|k})$ in $\D$. Hence
\begin{align}
&\|x_{j+1|k}-R_q(z_{j+1|k})\|\nonumber\\
&\quad\le
\|F(x_{j|k},u_{j|k})-F(R_q(z_{j|k}),u_{j|k})\|\nonumber\\
&\qquad+
\|F(R_q(z_{j|k}),u_{j|k})-R_q(F_q(z_{j|k},u_{j|k}))\|\nonumber\\
&\quad\le \rho\bar r_{j,q}+\delta_q(z_{j|k},u_{j|k})
=\bar r_{j+1,q},
\end{align}
which completes the induction.
\end{proof}

The envelope is finite-horizon only. Therefore $\rho$ may exceed one; this can make low-order certificates conservative with prediction depth, but it creates no terminal invariance contradiction. 

\subsection{Robust DCBF Constraints}

At the measured state, impose the exact full-order condition
\begin{equation}
    u_{0|k}\in\Ksafe(x_k),
    \label{eq:exactfirst}
\end{equation}
which checks both the DCBF inequality and successor membership in $\D$ without any reduced-state approximation.

For $j\ge1$, Lemma~\ref{lem:error} and \eqref{eq:hLip} give
\begin{equation}
    h(R_q(z_{j|k}))-L_h\bar r_{j,q}
    \le h(x_{j|k})
    \le h(R_q(z_{j|k}))+L_h\bar r_{j,q}.
    \label{eq:hinterval}
\end{equation}
Accordingly, for $j=1,\ldots,N-1$, define
\begin{align}
    \Gamma_{j,q}
    &:={}
    h(R_q(z_{j+1|k}))-L_h\bar r_{j+1,q}
    \nonumber\\
    &\quad-
    \beta\big(h(R_q(z_{j|k}))+L_h\bar r_{j,q}\big).
    \label{eq:budget}
\end{align}
The robust future DCBF constraints are
\begin{equation}
    \Gamma_{j,q}\ge0,
    \qquad j=1,\ldots,N-1.
    \label{eq:futurecbf}
\end{equation}
Thus order is not used to certify the measured state; it determines whether the reduced predictor is accurate enough to support a recursively feasible future certificate.

\subsection{Minimum Certified Order}

Let $\mathcal F_q(x_k)$ denote the feasible set of all control and reduced-state trajectories that satisfy the order-$q$ MPC constraints introduced in Section~IV. Define
\begin{equation}
    q^\star(k)
    :=\min\left\{
    q\in\{q_p(k),\ldots,n\}:\mathcal F_q(x_k)\neq\emptyset
    \right\},
    \label{eq:qstar}
\end{equation}
which can be interpreted as minimum prediction dimension needed for future certified feasibility.

\begin{theorem}
\label{thm:horizon}
Under Assumptions~\ref{ass:regularity}--\ref{ass:rom}, every feasible point of $\mathcal F_q(x_k)$ generates, under the same input sequence, a full-order prediction satisfying
\begin{align}
    x_{j|k}&\in\D,
    &&j=0,\ldots,N,
    \label{eq:horizondomain}\\
    h(x_{j+1|k})&\ge\beta h(x_{j|k}),
    &&j=0,\ldots,N-1.
    \label{eq:horizonchain}
\end{align}
If additionally $x_k\in\Ssafe$, then $h(x_{j|k})\ge\beta^j h(x_k)\ge0$ for all $j$. If $\mathcal F_n(x_k)\neq\emptyset$, the minimum in \eqref{eq:qstar} exists.
\end{theorem}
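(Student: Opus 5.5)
The plan is to take any feasible point of $\mathcal F_q(x_k)$ and note that, whatever else Section~IV adds, its constraints include the exact first-step condition \eqref{eq:exactfirst}, the domain-containment condition \eqref{eq:domaincontain}, and the robust future DCBF constraints \eqref{eq:futurecbf}. We then reconstruct the full-order prediction \eqref{eq:fompred} under the same inputs and transfer each reduced-order constraint to it.

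\emph{Domain membership \eqref{eq:horizondomain}.} At $j=0$ this is $x_{0|k}=x_k\in\D$; this holds because \eqref{eq:exactfirst} requires $\Ksafe(x_k)$ to be defined, which needs $x_k\in\D$, and in any case the measured state is assumed to lie in $\D$. At $j=1$, the condition $u_{0|k}\in\Ksafe(x_k)$ gives $x_{1|k}=F(x_k,u_{0|k})\in\D$ directly. For $j=1,\ldots,N$, Lemma~\ref{lem:error} applies because \eqref{eq:domaincontain} is a constraint of the problem. It gives $x_{j|k}\in R_q(z_{j|k})\oplus\Bball(\bar r_{j,q})\subseteq\D$.

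\emph{DCBF chain \eqref{eq:horizonchain}.} The case $j=0$ is exactly the barrier inequality in the definition \eqref{eq:Ksafe}. For $j=1,\ldots,N-1$, we use the interval bounds \eqref{eq:hinterval}, which are valid by Lemma~\ref{lem:error} and \eqref{eq:hLip}. Combining them,
\[
h(x_{j+1|k})-\beta h(x_{j|k})\ \ge\ \big(h(R_q(z_{j+1|k}))-L_h\bar r_{j+1,q}\big)-\beta\big(h(R_q(z_{j|k}))+L_h\bar r_{j,q}\big)=\Gamma_{j,q}\ge0 .
\]
Here the lower bound is applied to the positive term and, since $\beta\ge0$, the upper bound to the subtracted term.

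\emph{Remaining claims.} If $x_k\in\Ssafe$, induction on the chain gives $h(x_{j|k})\ge\beta^j h(x_k)$. This uses $\beta\ge0$ so that multiplying the previous bound by $\beta$ preserves the inequality. Nonnegativity then follows from $h(x_k)\ge0$. Finally, if $\mathcal F_n(x_k)\neq\emptyset$, the index set in \eqref{eq:qstar} is a nonempty subset of the finite set $\{q_p(k),\ldots,n\}$, so the minimum exists.

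The main obstacle is bookkeeping rather than analysis. We must make sure that the Section~IV constraint set genuinely includes \eqref{eq:exactfirst}, \eqref{eq:domaincontain} and \eqref{eq:futurecbf}, and in particular the containment at $j=N$ that gives $x_{N|k}\in\D$. We must also check the trivial full-order case $q=n$: there $R_n=W_n(\cdot)$ with $W_n$ orthogonal and $\delta_n\equiv0$, so Lemma~\ref{lem:error} yields envelopes that are consistent with the exact first step.
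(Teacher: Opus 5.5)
Your proposal is correct and follows essentially the same route as the paper's appendix proof. The exact first-step constraint gives $x_{1|k}\in\D$ and the $j=0$ barrier inequality; Lemma~\ref{lem:error} with the domain containment gives $x_{j|k}\in\D$ and the Lipschitz interval bounds, which (using $\beta\ge0$) turn $\Gamma_{j,q}\ge0$ into the full-order chain, followed by induction and finiteness of the candidate order set. Your justification of $x_{0|k}=x_k\in\D$ through the domain of $\Ksafe$ fills in a point the paper leaves implicit, and the $q=n$ remark is harmless but not needed.
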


\begin{proof}
See Appendix.
\end{proof}
The first DCBF step is exact; the future steps are certified from the reduced prediction and its error envelope.

\section{Safety-Critical Reduced-Order MPC}

\subsection{Safe Terminal Set}

The safe terminal set is defined once in the full-order state space and is independent of the DyAS basis.

\begin{assumption}
\label{ass:terminal}
There exist a nonempty compact set $\Xf\subset\Ssafe$ and a terminal feedback $\kappa_f:\Xf\rightarrow\U$ such that
\begin{equation}
    \kappa_f(x)\in\Ksafe(x),
    \qquad
    F(x,\kappa_f(x))\in\Xf,
    \quad \forall x\in\Xf.
    \label{eq:terminal}
\end{equation}
\end{assumption}

Thus Assumption~\ref{ass:terminal} is a local DCBF-invariance condition on the backup region $\Xf$. The terminal controller is DCBF admissible and keeps $\Xf$ invariant. The set and controller are synthesized once offline and are not recomputed as the DyAS basis changes.

\subsection{Order-Adaptive MPC Formulation}

For each candidate $q\ge q_p(k)$, define the feasible set $\mathcal F_q(x_k)$ by the constraints below and seek the performance infimum
\begin{subequations}
\label{eq:mpc}
\begin{align}
J_q^{\inf}(x_k):=
\inf_{\mathbf z,\mathbf u}\;&
\sum_{j=0}^{N-1}\ell_q(z_{j|k},u_{j|k})
+V_{f,q}(z_{N|k})
\label{eq:cost}\\
\mathrm{s.t.}\quad
&z_{0|k}=W_q^\top x_k,
\label{eq:mpcz0}\\
&z_{j+1|k}=F_q(z_{j|k},u_{j|k}),
\; j=0,\ldots,N-1,
\label{eq:mpcdyn}\\
&u_{j|k}\in\U,
\; j=0,\ldots,N-1,
\label{eq:mpcu}\\
&R_q(z_{j|k})\oplus\Bball(\bar r_{j,q})\subseteq\D,
\; j=1,\ldots,N,
\label{eq:mpcdomain}\\
&u_{0|k}\in\Ksafe(x_k),
\label{eq:mpcfirst}\\
&\Gamma_{j,q}\ge0,
\quad j=1,\ldots,N-1,
\label{eq:mpcfuture}\\
&R_q(z_{N|k})\oplus\Bball(\bar r_{N,q})\subseteq\Xf.
\label{eq:mpcterm}
\end{align}
\end{subequations}
where $\ell_q:\R^q\times\U\to\R_{\ge0}$ and $V_{f,q}:\R^q\to\R_{\ge0}$ are, respectively,
a continuous stage cost and terminal cost that encode the tracking objective, and the radii $\bar r_{j,q}$ are given by the deterministic envelope \eqref{eq:r1}--\eqref{eq:envelope}. The guarantees apply to any feasible point; attainment of the infimum is not required. The containments in \eqref{eq:mpcdomain} and \eqref{eq:mpcterm} are exact set-inclusion conditions and may be replaced in implementation by any sufficient tightening. At $q=n$, Assumption~\ref{ass:rom} and \eqref{eq:r1}--\eqref{eq:envelope} give $\bar r_{j,n}=0$, so \eqref{eq:mpc} reduces to the exact full-order DCBF-MPC problem with terminal set $\Xf$.

The minimum order $q^\star(k)$ in \eqref{eq:qstar} is a set-theoretic quantity. In implementation, candidate orders are tested sequentially from $q_p(k)$ to $n$. Since a local nonlinear-programming solver cannot certify global infeasibility, let $\hat q(k)$ denote the smallest tested order for which a returned candidate is successfully verified; hence $\hat q(k)$ may exceed $q^\star(k)$. Failure to certify a feasible low order is conservative with respect to safety. A returned candidate is accepted only after independently reconstructing the reduced trajectory, recomputing the error envelope \eqref{eq:r1}--\eqref{eq:envelope}, and verifying the input, domain, DCBF, and terminal constraints in \eqref{eq:mpcu}--\eqref{eq:mpcterm}. Numerical verification accounts for reconstruction, envelope, and constraint-evaluation errors beyond the solver tolerance. If no candidate is verified, the controller follows the stored backup sequence and then applies $\kappa_f$ in $\Xf$.

% Positive inset keeps the algorithm rule inside the top margin.
\begin{figure}[t]
\vspace*{5pt}
\begin{minipage}{\linewidth}
\begin{algorithm}[H]
\caption{Safety-Aware Order-Adaptive DyAS MPC}
\label{alg:main}
\begin{algorithmic}[1]
\STATE Measure $x_k$, compute the DyAS basis \eqref{eq:basis}, and determine $q_p(k)$ from \eqref{eq:qperf}.
\FOR{$q=q_p(k),\ldots,n$}
    \STATE Construct $F_q$, $R_q$, and $\delta_q$ as in \eqref{eq:rom}--\eqref{eq:defect}.
    \STATE Solve \eqref{eq:mpc} at order $q$ to obtain a candidate feasible trajectory.
    \IF{a candidate is returned}
        \STATE Recompute \eqref{eq:r1}--\eqref{eq:envelope} and verify \eqref{eq:mpcu}--\eqref{eq:mpcterm}.
        \IF{verified}
            \STATE Set $\hat q(k)\leftarrow q$, store $\{u_{j|k}\}_{j=1}^{N-1}$, and apply $u_{0|k}$.
            \STATE \textbf{break}
        \ENDIF
    \ENDIF
\ENDFOR
\IF{no candidate is verified}
    \STATE Apply the next stored backup input, or $\kappa_f(x_k)$ from \eqref{eq:terminal} if the queue is empty.
\ENDIF
\end{algorithmic}
\end{algorithm}
\end{minipage}
\end{figure}

\subsection{Recursive Feasibility and Closed-Loop Safety}

\begin{theorem}
\label{thm:recursive}
Suppose Assumptions~\ref{ass:regularity}--\ref{ass:terminal} hold and a feasible point of $\mathcal F_q(x_k)$ is certified at time $k$. If the first $\ell\in\{1,\ldots,N\}$ inputs of that certified sequence are applied to the deterministic plant, then $\mathcal F_n(x_{k+\ell})\neq\emptyset$. In particular, if $\ell=N$, then $x_{k+N}\in\Xf$.
\end{theorem}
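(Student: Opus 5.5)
We construct an explicit feasible point of the full-order problem $\mathcal F_n(x_{k+\ell})$ by shifting the certified sequence and appending the terminal feedback. Let $(\mathbf z,\mathbf u)\in\mathcal F_q(x_k)$ be the certified point, and let $x_{j|k}$ be the full-order prediction \eqref{eq:fompred} under $\mathbf u$. Because the plant is deterministic, applying $u_{0|k},\ldots,u_{\ell-1|k}$ gives $x_{k+i}=x_{i|k}$ for $i=0,\ldots,\ell$.

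First, Theorem~\ref{thm:horizon} gives $x_{j|k}\in\D$ and $h(x_{j+1|k})\ge\beta h(x_{j|k})$ along the whole horizon. The terminal constraint \eqref{eq:mpcterm} combined with Lemma~\ref{lem:error} at $j=N$ (whose hypothesis \eqref{eq:domaincontain} is exactly \eqref{eq:mpcdomain}) gives $x_{N|k}\in R_q(z_{N|k})\oplus\Bball(\bar r_{N,q})\subseteq\Xf$. Taking $\ell=N$ proves the second claim directly.

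Next, define the candidate input sequence at time $k+\ell$. For $i=0,\ldots,N-\ell-1$, set $v_i:=u_{\ell+i|k}$. For $i=N-\ell,\ldots,N-1$, set $v_i:=\kappa_f(\xi_i)$, where $\xi_0=x_{k+\ell}$ and $\xi_{i+1}=F(\xi_i,v_i)$. Then $\xi_i=x_{\ell+i|k}$ for $i\le N-\ell$, so $\xi_{N-\ell}=x_{N|k}\in\Xf$. By Assumption~\ref{ass:terminal}, $\xi_i\in\Xf$ for all $i\ge N-\ell$, and in particular $\xi_N\in\Xf$. Set $z_i:=W_n^\top\xi_i$. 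Since $W_n$ is orthogonal, $R_n(z_i)=\xi_i$ and $F_n(z_i,v_i)=W_n^\top F(\xi_i,v_i)=z_{i+1}$. This gives \eqref{eq:mpcz0}--\eqref{eq:mpcdyn}, and Assumption~\ref{ass:rom} gives $\bar r_{i,n}=0$ identically. The one subtlety is that \eqref{eq:r1} gives $\bar r_{1,n}=\|F(x,v_0)-W_nW_n^\top F(x,v_0)\|=0$.

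Then check the remaining constraints with zero radii.
\begin{itemize}
\item Inputs lie in $\U$, since the $u_{j|k}$ do and $\kappa_f$ maps into $\U$.
\item Domain membership \eqref{eq:mpcdomain} reduces to $\xi_i\in\D$. This follows from Theorem~\ref{thm:horizon} for the shifted part and from $\Xf\subset\Ssafe\subset\D$ for the appended part.
\item The first-step constraint \eqref{eq:mpcfirst} requires $v_0\in\Ksafe(x_{k+\ell})$. If $\ell<N$, this holds because $F(x_{\ell|k},u_{\ell|k})\in\D$ by \eqref{eq:horizondomain} and the DCBF inequality holds by \eqref{eq:horizonchain}. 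If $\ell=N$, then $v_0=\kappa_f(x_{k+N})\in\Ksafe$ by \eqref{eq:terminal}.
\item The future constraints \eqref{eq:mpcfuture} reduce to $h(\xi_{i+1})\ge\beta h(\xi_i)$. This follows from \eqref{eq:horizonchain} on the shifted portion and from $\kappa_f\in\Ksafe$ on the appended portion.
\item The terminal constraint \eqref{eq:mpcterm} is $\xi_N\in\Xf$.
\end{itemize}

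The main obstacle is mostly bookkeeping: the index splice at $i=N-\ell$, the two boundary cases $\ell=N$ and $\ell=1$, and making sure the exact full-order envelope is genuinely zero. A further point needs care: Lemma~\ref{lem:error} and Theorem~\ref{thm:horizon} must be applied to the certified reduced point rather than to the full-order one, so that \eqref{eq:horizonchain} holds at every index $j\le N-1$ used in the shift.
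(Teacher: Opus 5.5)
Your proposal is correct and follows the paper's proof: determinism gives $x_{k+\ell}=x_{\ell|k}$, Lemma~\ref{lem:error} with \eqref{eq:mpcterm} gives $x_{N|k}\in\Xf$, and you shift the unused certified inputs, append the terminal feedback $\kappa_f$, and check the full-order constraints with zero envelope. The only difference is that you make explicit two steps the paper states briefly: the verification that $\bar r_{1,n}=0$ via orthogonality of $W_n$, and the case split on the first-step constraint for $\ell<N$ versus $\ell=N$.
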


\begin{proof}
See Appendix.
\end{proof}
The unused portion of a certified horizon can be used as a backup sequence for up to $N$ consecutive steps, after which the state reaches $\Xf$. The case $\ell=1$ gives the standard one-step recursive-feasibility result, while $\ell=N$ shows that the stored sequence reaches $\Xf$, where $\kappa_f$ can continue the safety certificate.

\begin{corollary}
\label{cor:safety}
Suppose $x_0\in\Ssafe$ and an initial feasible candidate exists. If the controller applies either a newly certified input, the next stored backup input, or $\kappa_f(x_k)$ in $\Xf$, then $x_k\in\Ssafe$ for all $k\ge0$.
\end{corollary}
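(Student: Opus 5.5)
The plan is an induction on $k$ that carries a stronger invariant than safety alone. At every time $k$ the closed-loop state will be shown to satisfy one of two conditions. \textbf{(a)} There is a stored certified sequence that was produced at some earlier time $k'\le k$ by a feasible point of $\mathcal F_{q}(x_{k'})$, and of it exactly $k-k'<N$ inputs have been applied so far. \textbf{(b)} $x_k\in\Xf$. In addition, $x_k\in\Ssafe$ at every step. The point of the invariant is that safety of the next state depends on which of the three admissible actions is taken, and the invariant guarantees that each action is itself DCBF admissible.

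\emph{Base case.} By hypothesis $x_0\in\Ssafe$ and an initial feasible candidate exists, so (a) holds with $k'=0$.

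\emph{Inductive step.} We split into the three actions named in the corollary.

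\emph{Case 1: a newly certified input $u_{0|k}$ is applied.} By \eqref{eq:mpcfirst}, $u_{0|k}\in\Ksafe(x_k)$. Hence $x_{k+1}\in\D$ and $h(x_{k+1})\ge\beta h(x_k)\ge0$, so $x_{k+1}\in\Ssafe$. Invariant (a) is then renewed with $k'=k$.

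\emph{Case 2: the next stored backup input is applied.} Invariant (a) holds with some $k'$ and $\ell=k-k'+1\le N$ applied inputs. The plant is deterministic and the model exact, so the realized states coincide with the full-order prediction $x_{j|k'}$. Theorem~\ref{thm:horizon} then gives $x_{k+1}=x_{\ell|k'}\in\D$ and
\[
h(x_{\ell|k'})\ge\beta^{\ell}h(x_{k'})\ge0,
\]
because $x_{k'}\in\Ssafe$ by the induction hypothesis. So $x_{k+1}\in\Ssafe$. If $\ell<N$, invariant (a) persists. If $\ell=N$, the queue is exhausted, and Theorem~\ref{thm:recursive} gives $x_{k+1}\in\Xf$, which is invariant (b).

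\emph{Case 3: $\kappa_f(x_k)$ is applied with $x_k\in\Xf$.} By Assumption~\ref{ass:terminal}, $\kappa_f(x_k)\in\Ksafe(x_k)$ and $F(x_k,\kappa_f(x_k))\in\Xf\subset\Ssafe$. So invariant (b) persists and $x_{k+1}$ is safe.

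The main obstacle is the bookkeeping that ties the queue to the terminal set. We must check that Algorithm~\ref{alg:main} only falls back to $\kappa_f$ when the queue is empty, and that an empty queue in fact means $x_k\in\Xf$. The worry is that $\kappa_f$ might be invoked outside $\Xf$. Case 2 with $\ell=N$ rules this out, and the corollary's hypothesis restricts $\kappa_f$ to $x_k\in\Xf$ in any event. I would also note explicitly that a stored sequence is overwritten only by a freshly certified one, which keeps the index $\ell\le N$ well defined. With that established, the three cases exhaust the controller's actions, and induction yields $x_k\in\Ssafe$ for all $k\ge0$.
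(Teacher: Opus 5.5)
Your proof is correct and follows the same route as the paper: induction over the three control branches, using \eqref{eq:mpcfirst} for a newly certified input, Theorem~\ref{thm:horizon} for stored backup inputs, Theorem~\ref{thm:recursive} to place the state in $\Xf$ once the stored horizon is exhausted, and Assumption~\ref{ass:terminal} for $\kappa_f$. Your explicit (a)/(b) invariant only makes precise the queue bookkeeping that the paper's short proof leaves implicit.
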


\begin{proof}
A newly certified input satisfies \eqref{eq:mpcfirst}. During backup execution, Theorem~\ref{thm:horizon} guarantees the stored trajectory remains in $\Ssafe$, and after horizon exhaustion Theorem~\ref{thm:recursive} places the state in $\Xf$, where \eqref{eq:terminal} makes $\kappa_f$ safe. Hence each control branch preserves $\Ssafe$, and induction gives the result.
\end{proof}

\section{Numerical Evaluation}

\begin{figure}[t]
    \vspace*{5pt}
    \centering
    \begin{minipage}[t]{\linewidth}
        \centering
        \includegraphics[width=\linewidth]{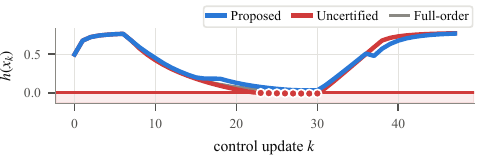}\par
        \small (a)
    \end{minipage}\par\medskip
    \begin{minipage}[t]{\linewidth}
        \centering
        \includegraphics[width=\linewidth]{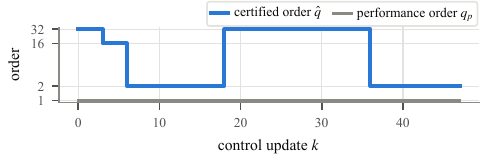}\par
        \small (b)
    \end{minipage}\par\medskip
    \begin{minipage}[t]{\linewidth}
        \centering
        \includegraphics[width=\linewidth]{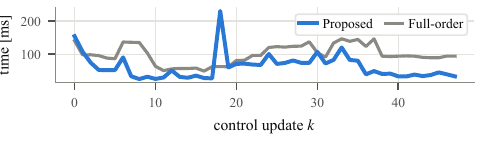}\par
        \small (c)
    \end{minipage}
    \caption{Closed-loop results: (a) physical barrier value for the proposed adaptive controller, uncertified DyAS, and full-order MPC; (b) certified and performance-oriented prediction orders; and (c) computation time for the proposed adaptive controller and full-order MPC.}
    \label{fig:closed-loop-results}
\end{figure}

We compare the proposed controller with uncertified DyAS, fixed-order certified MPC
($q=32$), and full-order MPC on
$x^+=Ax+bu+c\tanh(g^\top x)+\B(x,x)$, with
$n=160$, $|u|\le1$, and horizon $N=6$. Here
$A=\mathrm{blkdiag}(0.96,A_t)$, with $A_t$ random and rescaled so that
$\|A_t\|_2\le0.88$; $b=0.25e_1$;
$c=(0.06,c_t^\top)^\top$ with $\|c_t\|_2=0.12$; $g$ is a fixed unit vector; and the
random symmetric tensor is rescaled so that
$\|\B(a,b)\|_2\le0.04\|a\|_2\|b\|_2$.
On $\D=\{x:\|x\|_2\le1.5\}$, these bounds justify $\rho=1.215$. Safety is defined by
$h(x)=1-\|x\|_2$, with $L_h=1$. Using training-state covariance $\Sigma$, we set
$D_q=\Sigma W_q(W_q^\top\Sigma W_q)^{-1}$, $R_q(z)=D_qz$, and
$F_q(z,u)=W_q^\top F(D_qz,u)$, with projected coefficients assembled algebraically.
The residual is factored exactly as
$F(D_qz,u)-D_qF_q(z,u)=Q_qT_q\phi_q(z,u)$, where
$Q_q^\top Q_q=I$ and
$\phi_q=[z^\top,u,\tanh(g^\top D_qz),(z\otimes z)^\top]^\top$,
giving $\delta_q(z,u)=\|T_q\phi_q(z,u)\|_2$. The terminal set
$\Xf=\{x:\|x\|_2\le0.4\}$ uses
$\kappa_f(x)=(-0.46x_1-0.06\tanh(g^\top x))/0.25$; its
contraction factor is below $0.908$ and input magnitude below $0.8272$, establishing
terminal invariance and DCBF admissibility. The reference ramps from $0.20$ to $0.95$,
holds, and then relaxes; $q_p=1$ throughout. Figures~\ref{fig:closed-loop-results}
and~\ref{fig:order-feasibility} show a representative run, while
Fig.~\ref{fig:solve-time-comparison} aggregates five random initial conditions with
$48$ updates each ($240$ updates per controller).

Figure~\ref{fig:closed-loop-results}(a) compares the barrier value along a representative closed-loop trajectory. The proposed controller closely follows full-order MPC and maintains $h(x_k)\geq0$, while uncertified DyAS repeatedly crosses the safety boundary. This shows that performance-based order selection alone can be insufficient for safety-critical prediction. Figure~\ref{fig:closed-loop-results}(b) shows the prediction orders. The performance order remains at $q_p=1$, while the certified order varies with operating condition, reaching $32$ near the safety-critical interval and returning to lower values afterward. The increase occurs before the uncertified trajectory reaches the boundary, showing that selection is driven by future certification rather than the instantaneous barrier value. Figure~\ref{fig:closed-loop-results}(c) shows computation time. The proposed controller is generally faster than full-order MPC at low orders, while its cost increases with higher orders and additional order-search effort.

\begin{figure}[t]
    \vspace*{5pt}
    \centering
    \includegraphics[width=\linewidth]{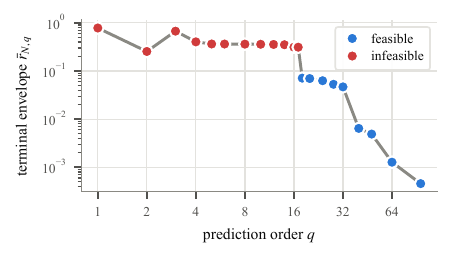}
    \caption{Terminal prediction-error envelope versus prediction order at a representative safety-critical update, with feasible and infeasible candidates indicated.}
    \label{fig:order-feasibility}
\end{figure}

Figure~\ref{fig:order-feasibility} shows the terminal prediction-error envelope $\bar r_{N,q}$ across candidate orders at a safety-critical update. Low-order candidates produce larger uncertainty and are infeasible, while higher orders eventually reduce the envelope enough to satisfy the certified MPC constraints. The envelope is not monotone with order: increasing the prediction order can temporarily increase the certified error before it decreases. This motivates explicit certification of each candidate order.

\begin{figure}[t]
    \vspace*{5pt}
    \centering
    \includegraphics[width=\linewidth]{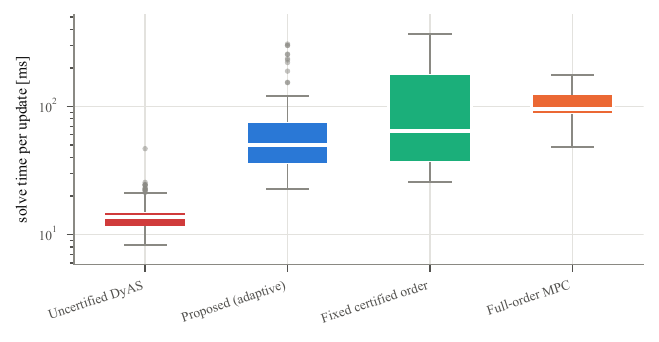}
    \caption{Computation time over $240$ updates for the four controllers.}
    \label{fig:solve-time-comparison}
\end{figure}

Figure~\ref{fig:solve-time-comparison} summarizes computation time over all $240$ updates. Uncertified DyAS is the least expensive because it avoids safety certification. The proposed controller incurs additional order-search and certification cost, but its typical computation time remains below the fixed certified-order controller and full-order MPC. The reported online cost includes the DyAS update, reduced-model and decoder evaluations, rejected solves, numerical verification, the exact first-step evaluation, and the accepted MPC solve. Overall, performance-only reduction is inexpensive but can be insufficient for safety, while high-order or full-order prediction is more costly. The proposed method adapts the prediction order to the level required for certified future safety.

\section{Conclusion}

This paper presented a safety-aware dimension-selection framework for reduced-order MPC. Dynamic active subspaces provide a performance-oriented ordering of reduced coordinates, while finite-horizon prediction-error bounds determine whether a given order is sufficient to maintain the future DCBF and terminal constraints. The resulting controller selects the smallest certified prediction order and increases model dimension only when required by the safety-critical operating condition. The proposed formulation guarantees full-order horizon safety, recursive feasibility, and closed-loop invariance of the safe set. Numerical results show that the performance-oriented order alone can be insufficient near safety-critical regimes, while the adaptive certified order increases when needed and reduces computation compared with consistently using a high-order or full-order model. The results also confirm that prediction error need not vary monotonically with order, motivating explicit certification of each candidate dimension.

\appendix

Proof of Theorem~\ref{thm:horizon}.

\begin{proof}
Let a feasible point of $\mathcal F_q(x_k)$ be fixed and let $x_{j|k}$ be the conceptual full-order trajectory under the same input sequence. From \eqref{eq:mpcfirst},
\begin{equation}
    x_{1|k}=F(x_k,u_{0|k})\in\D,
    \qquad
    h(x_{1|k})\ge\beta h(x_k).
    \label{eq:apph1}
\end{equation}
For every $j=1,\ldots,N$, Lemma~\ref{lem:error} gives
\begin{equation}
    \|x_{j|k}-R_q(z_{j|k})\|\le\bar r_{j,q}.
    \label{eq:apph2}
\end{equation}
Together with \eqref{eq:mpcdomain}, this implies
\begin{equation}
    x_{j|k}\in\D,
    \qquad j=1,\ldots,N.
    \label{eq:apph3}
\end{equation}
Lipschitz continuity of $h$ yields
\begin{align}
    h(x_{j|k})
    &\ge h(R_q(z_{j|k}))-L_h\bar r_{j,q},
    \label{eq:apph4}\\
    h(x_{j|k})
    &\le h(R_q(z_{j|k}))+L_h\bar r_{j,q}.
    \label{eq:apph5}
\end{align}
For $j=1,\ldots,N-1$, feasibility of \eqref{eq:mpcfuture} gives
\begin{align}
&h(R_q(z_{j+1|k}))-L_h\bar r_{j+1,q}\nonumber\\
&\qquad\ge
\beta\big(h(R_q(z_{j|k}))+L_h\bar r_{j,q}\big).
    \label{eq:apph6}
\end{align}
Hence
\begin{align}
    h(x_{j+1|k})
    &\ge h(R_q(z_{j+1|k}))-L_h\bar r_{j+1,q}\nonumber\\
    &\ge\beta\big(h(R_q(z_{j|k}))+L_h\bar r_{j,q}\big)\nonumber\\
    &\ge\beta h(x_{j|k}).
    \label{eq:apph7}
\end{align}
Combining \eqref{eq:apph1} and \eqref{eq:apph7},
\begin{equation}
    h(x_{j+1|k})\ge\beta h(x_{j|k}),
    \qquad j=0,\ldots,N-1.
    \label{eq:apph8}
\end{equation}
If $x_k\in\Ssafe$, then
\begin{equation}
    h(x_{j|k})\ge\beta^j h(x_k)\ge0,
    \qquad j=0,\ldots,N.
    \label{eq:apph9}
\end{equation}
Finally, if $\mathcal F_n(x_k)\neq\emptyset$, then the finite candidate set in \eqref{eq:qstar} is nonempty; therefore its minimum exists. This proves all claims.
\end{proof}

Proof of Theorem~\ref{thm:recursive}.

\begin{proof}
Fix a feasible order-$q$ point at time $k$, with input sequence
$\{u_{j|k}^\star\}_{j=0}^{N-1}$ and corresponding conceptual full-order trajectory
$\{x_{j|k}^\star\}_{j=0}^{N}$. Let $\ell\in\{1,\ldots,N\}$ of those inputs be applied in order. Determinism gives
\begin{equation}
    x_{k+s}=x_{s|k}^\star,
    \qquad s=0,\ldots,\ell.
    \label{eq:appr1}
\end{equation}
By Theorem~\ref{thm:horizon}, the stored full-order trajectory satisfies
\begin{align}
    x_{j|k}^\star&\in\D,
    \qquad j=0,\ldots,N,
    \label{eq:appr2}\\
    h(x_{j+1|k}^\star)&\ge\beta h(x_{j|k}^\star),
    \qquad j=0,\ldots,N-1.
    \label{eq:appr3}
\end{align}
The terminal containment \eqref{eq:mpcterm} and Lemma~\ref{lem:error} imply $x_{N|k}^\star\in\Xf$.
Starting from $y_0:=x_{N|k}^\star$, define the terminal continuation
\begin{equation}
    v_s:=\kappa_f(y_s),
    \qquad
    y_{s+1}:=F(y_s,v_s),
    \quad s=0,\ldots,\ell-1.
    \label{eq:appr5}
\end{equation}
Assumption~\ref{ass:terminal} yields, for every such $s$,
\begin{align}
    y_s&\in\Xf\subset\D,
    \label{eq:appr6}\\
    v_s&\in\Ksafe(y_s)\subseteq\U,
    \label{eq:appr7}\\
    y_{s+1}&\in\Xf,
    \qquad
    h(y_{s+1})\ge\beta h(y_s).
    \label{eq:appr8}
\end{align}
Now construct an $N$-step full-order candidate at time $k+\ell$ by taking the unused controls from the stored plan followed by the terminal continuation:
\begin{equation}
\tilde u_{j|k+\ell}:=
\begin{cases}
    u_{j+\ell|k}^\star,
    &0\le j\le N-\ell-1,\\
    v_{j-(N-\ell)},
    &N-\ell\le j\le N-1.
\end{cases}
\label{eq:appr9}
\end{equation}
The associated exact states satisfy
\begin{equation}
\tilde x_{j|k+\ell}=
\begin{cases}
    x_{j+\ell|k}^\star,
    &0\le j\le N-\ell,\\
    y_{j-(N-\ell)},
    &N-\ell\le j\le N.
\end{cases}
\label{eq:appr10}
\end{equation}
Equations \eqref{eq:appr2}--\eqref{eq:appr3} verify the shifted stored segment, while \eqref{eq:appr6}--\eqref{eq:appr8} verify the appended terminal segment. Since $\tilde x_{N|k+\ell}=y_\ell\in\Xf$, the terminal constraint holds. At $q=n$, the representation is exact and the error envelope vanishes by Assumption~\ref{ass:rom} and \eqref{eq:r1}--\eqref{eq:envelope}. Hence \eqref{eq:appr9} is feasible for the full-order problem, so $\mathcal F_n(x_{k+\ell})\neq\emptyset$. The same construction holds for $\ell=N$, when all $N$ inputs are generated by $\kappa_f$ from $x_{N|k}^\star\in\Xf$.
\end{proof}

\end{document}